\title{An extremum seeking algorithm for monotone Nash equilibrium problems}
\author{Suad Krilašević and Sergio Grammatico%
\thanks{This work was partially supported by the ERC under research project COSMOS (802348). E-mail addresses: \{s.krilasevic-1, s.grammatico\}@tudelft.nl.}}
\date{October 2019}
\newcommand{\nref}[1]{(\ref{#1})}
\newcommand{\bfs}[1]{\boldsymbol{#1}}
\newcommand{\col}[1]{\operatorname{col}\left({#1}\right)}
\newcommand{\diag}[1]{\operatorname{diag}\left({#1}\right)}
\newcommand{\kraj}{\hspace*{\fill} $\square$}
\newcommand{\m}[1]{\begin{bmatrix} #1 \end{bmatrix}}
\newcommand{\n}[1]{\left\| #1 \right\|}
\newcommand{\red}{\nonumber \\}
\newcommand{\vprod}[2]{\left \langle #1 \mid #2 \right \rangle}
\newcommand{\R}{\mathbb{R}}
\newcommand{\dom}{\mathrm{dom}}
\newcommand{\proj}{\mathrm{proj}}
\newcommand{\argmin}{\mathrm{argmin}}
\newtheorem{defn}{Definition}
\newtheorem{stan_assum}{Standing assumption}
\newtheorem{theorem}{Theorem}
\newtheorem{rem}{Remark}
\begin{document}

\maketitle

\begin{abstract}
    In this paper we consider the problem of finding a Nash equilibrium (NE) via zeroth-order feedback information in games with merely monotone pseudogradient mapping. Based on hybrid system theory, we propose a novel extremum seeking algorithm which converges to the set of Nash equilibria in a semi-global practical sense.  Finally, we present two simulation examples. The first shows that the standard extremum seeking algorithm fails, while ours succeeds in reaching NE. In the second, we simulate an allocation problem with fixed demand.
\end{abstract}

\section{Introduction}
Decision problems where self-interested intelligent systems or agents wish to optimize their individual cost objective function arise in many engineering applications, such as demand-side management in smart grids \cite{mohsenian2010autonomous}, \cite{saad2012game}, charging/discharging coordination for plug-in electric vehicles \cite{ma2011decentralized}, \cite{grammatico2017dynamic}, thermostatically controlled loads \cite{li2015market} and robotic formation control \cite{lin2014distributed}. The key feature that distinguishes these problems from multi-agent distributed optimization is the fact the cost functions are coupled together. Currently, one active research area is that of finding (seeking) actions that are self-enforceable, e.g. actions such that no agent has an incentive to unilaterally deviate from - the so-called Nash equilibrium (NE). Due to the coupling of the cost functions, on computing a NE algorithmically, some information on the actions of the other agents is necessary. The quality of this information can vary from knowing everything (full knowledge of the agent actions) \cite{yi2019operator}, estimates based on distributed consensus between the agents \cite{gadjov2019distributed}, to payoff-based estimates \cite{marden2009cooperative}, \cite{frihauf2011nash}. From the mentioned scenarios, the last one is of special interest as it requires no dedicated communication infrastructure.  \\ \\

\emph{Literature review:} In payoff-based algorithms, each agent can only measure the value of their cost function, but does not know its analytic form. Many of such algorithms are designed for NEPs with static agents with finite action spaces, e.g. \cite{goto2012payoff}, \cite{marden2009cooperative}, \cite{marden2012revisiting}. In the case of continuous action spaces, measurements are most often used to estimate the pseudogradient. A prominent class of payoff-based algorithms is Extremum Seeking Control (ESC). The main idea is to use perturbation signals to ``excite'' the cost function and estimate its gradient. Since the first general proof of convergence \cite{krstic2000stability}, there has been a strong research effort to extend the original ESC approach \cite{tan2006non}, \cite{ghaffari2012multivariable}, as well as to conceive diverse variants, e.g. \cite{durr2013lie}. ESC was used for NE seeking in \cite{frihauf2011nash} where the proposed algorithm is proven to converge to a neighborhood of a NE for games with strongly monotone pseudogradient \cite{krilavsevic2021learning}. The results are extended in \cite{liu2011stochastic} to include stochastic perturbation signals. In \cite{poveda2017framework}, Poveda and Teel propose a framework for the synthesis of a hybrid controller which could also be used for NEPs. The authors in \cite{poveda2020fixed} propose a fixed-time Nash equilibrium seeking algorithm which also requires a strongly monotone pseudogradient and communication between the agents. To the best of our knowledge, there is still no ESC algorithm for solving NEPs with merely \textit{monotone} pseudogradient. \\ \\ 
 A common approach for is to translate the NEP into a variational inequality (VI) \cite[Equ. 1.4.7]{facchinei2007finite}, and in turn into the problem of finding a zero of an operator \cite[Equ. 1.1.3]{facchinei2007finite}. For the special class of monotone operators, there exists a vast literature, see \cite{bauschke2011convex} for an overview. Each algorithm for finding zeros of monotone operators has different prerequisites and working assumptions that define what class of problems it can solve.  For example, the forward-backward algorithm requires that the forward operator, typically the pseudogradient, is monotone and cocoercive \cite[Thm. 26.14]{bauschke2011convex}, whilst  the forward-backward-forward algorithm requires only monotonicity of the forward operator \cite[Rem. 26.18]{bauschke2011convex}.  The drawback of the latter is that it requires two forwards steps per iteration, namely two (pseudo)gradient computations. Other algorithms such as the extragradient \cite{korpelevich1976extragradient} and the subgradient extragradient \cite{censor2011subgradient} ensure convergence with mere monotonicity of the pseudogradient, but still require two forward steps per iteration. Recently, the golden ratio algorithm in \cite{malitsky2019golden} is proven to converge in the monotone case with only one forward step. All of the previously mentioned algorithms are designed as discrete-time iterations. Most of them can be transformed into continuous-time algorithms, such as the forward-backward with projections onto convex sets  \cite{boct2017dynamical}, the forward-backward with projections onto tangent cones \cite{grammatico2017dynamic}, \cite{bianchi2021continuous}, forward-backward-forward \cite{bot2020forward} and the golden ratio \cite{gadjov2020exact}, albeit without projections  (see Appendix \ref{app: projection case}). \\ \\

\emph{Contribution}: Motivated  by  the  above  literature  and open  research  problem,  to  the  best  of  our  knowledge, we consider and solve for the first time the problem of learning (i.e., seeking via zeroth-order information) a NE in merely monotone games via ESC. Unlike other extremum seeking algorithms for NEPs, we do not require strong monotonicity of the pseudogradient. Specifically, we extend the results in \cite{gadjov2020exact} via hybrid systems theory to construct a novel extremum seeking scheme which exploits the single forward step property of the continuous-time golden ration algorithm. \\ \\

\emph{Notation}: $\mathbb{R}$ denotes the set of real numbers. For a matrix $A \in \mathbb{R}^{n \times m}$, $A^\top$ denotes its transpose. For vectors $x, y \in \mathbb{R}^{n}$, $M \in \R^{n \times n}$ a positive semi-definite matrix and $\mathcal{A} \subset \R^n$, $\vprod{x}{y}$, $\|x \|$, $\|x \|_M$ and $\|x \|_\mathcal{A}$ denote the Euclidean inner product, norm, weighted norm and distance to set respectively. Given $N$ vectors $x_1, \dots, x_N$, possibly of different dimensions, $\col{x_1, \dots x_N} \coloneqq \left[ x_1^\top, \dots, x_N^\top \right]^\top $. Collective vectors are defined as $\bfs{x} \coloneqq \col{x_1, \dots, x_N}$ and for each $i = 1, \dots, N$, $\bfs{x}_{-i} \coloneqq \col{ x_1, \dots,  x_{i -1},  x_{i + 1}, \dots, x_N } $. Given $N$ matrices $A_1$, $A_2$, \dots, $A_N$, $\operatorname{diag}\left(A_{1}, \ldots, A_{N}\right)$ denotes the block diagonal matrix with $A_i$ on its diagonal. For a function $v: \mathbb{R}^{n} \times \mathbb{R}^{m}  \rightarrow \mathbb{R}$ differentiable in the first argument, we denote the partial gradient vector as $\nabla_x v(x, y) \coloneqq \col{\frac{\partial v(x, y)}{\partial x_{1}}, \ldots, \frac{\partial v(x, y)}{\partial x_{N}}} \in \mathbb{R}^{n}$. We use $\mathbb{S}^{1}:=\left\{z \in \mathbb{R}^{2}: z_{1}^{2}+z_{2}^{2}=1\right\}$ to denote the unit circle in $\R^2$. The mapping $\proj_S : \R^n \rightarrow S$ denotes the projection onto a closed convex set $S$, i.e., $\proj_S(v) \coloneqq \argmin_{y \in S}\n{y - v}$.  
$\operatorname{Id}$ is the identity operator. $I_n$ is the identity matrix of dimension $n$ and $ \bfs{0}_n$ is vector column of $n$ zeros. A continuous function $\gamma: \R_+ \leftarrow \R_+$ is of class $\mathcal{K}$ if it is zero at zero and strictly increasing. A continuous function $\alpha: \R_+ \leftarrow \R_+$ is of class $\mathcal{L}$ if is non-increasing and converges to zero as its arguments grows unbounded. A continuous function $\beta: \R_+ \times \R_+ \rightarrow \R_+$ is of class $\mathcal{KL}$ if it is of class $\mathcal{K}$ in the first argument and of class $\mathcal{L}$ in the second argument.

\begin{defn}[UGAS]
For a dynamical system, with state $ x \in C \subseteq \R^n$ and 
\begin{align}
    & \dot{x} = f(x), \label{eq: reqular system}
\end{align}
where $f: \R^n \rightarrow \R^n$, a compact set $\mathcal{A} \subset C$ is uniformly globally asymptotically stable (UGAS) if there exists a $\mathcal{KL}$ function $\beta$ such that every solution of \nref{eq: reqular system} satisfies $\n{x(t)}_\mathcal{A} \leq \beta(\n{x(0)}_\mathcal{A}, t)$, for all $t \in \dom(x).$ \kraj
\end{defn}

\begin{defn}[SGPAS]
For a dynamical system parametrized by a vector of (small) positive parameters $\varepsilon \coloneqq \col{\varepsilon_1, \dots, \varepsilon_k}$, with state $x \in C \subseteq \R^n$ and 
\begin{align}
    \dot{x} = f_\varepsilon(x), \label{eq: small param system}
\end{align}
where $f_\varepsilon: \R^n \rightarrow \R^n$, a compact set $\mathcal{A} \subset C$ is semi-globally practically asymptotically stable (SGPAS) as $(\varepsilon_1, \dots, \varepsilon_k) \rightarrow 0^+$ if there exists a $\mathcal{KL}$ function $\beta$ such that the following holds:  For each $\Delta>0$ and $v>0$ there exists $\varepsilon_{0}^{*}>0$ such that for each $\varepsilon_{0} \in\left(0, \varepsilon_{0}^{*}\right)$ there exists $\varepsilon_{1}^{*}\left(\varepsilon_{0}\right)>0$ such that for each $\varepsilon_{1} \in$
$\left(0, \varepsilon_{1}^{*}\left(\varepsilon_{0}\right)\right) \ldots$ there exists $\varepsilon_{j}^{*}\left(\varepsilon_{j-1}\right)>0$ such that for each $\varepsilon_{j} \in$
$\left(0, \varepsilon_{j}^{*}\left(\varepsilon_{j-1}\right)\right) \ldots, j=\{2, \ldots, k\},$ each solution $x$ of \nref{eq: small param system} that satisfies $\n{x(0)}_\mathcal{A} \leq \Delta$ also satisfies $\n{x(t)}_\mathcal{A}  \leq \beta\left(\n{x(0)}_\mathcal{A} , t\right)+v$ for all $t \in \dom(x)$. \kraj
\end{defn}
\begin{rem}
In simple terms, for every initial conditions $x(0) \in \mathcal{A} + \Delta \mathbb{B}$, it is possible to tune the parameters $\varepsilon_1, \dots, \varepsilon_k$ in that order, such that the set $\mathcal{A} + v \mathbb{B}$ is UGAS.
\end{rem}
\section{Problem statement}
We consider a multi-agent system with $N$ agents indexed by $i \in \mathcal{I} = \{1, 2, \dots , N \}$, each with cost function 
\begin{align}
  J_i(u_i, \bfs{u}_{-i})\label{eq: cost_i},  
\end{align}
where $u_i \in  \mathbb{R}^{m_i}$ is the decision variable, $J_{i}: \mathbb{R}^{m_i} \rightarrow \mathbb{R}$. Let us also define $m \coloneqq \sum m_i$ and $m_{-i} \coloneqq \sum_{j \neq i} m_j$.\\ \\

In this paper, we assume that the goal of each agent is to minimize its cost function, i.e.,
\begin{align}
\forall i \in \mathcal{I}:\ \min_{u_i \in \R^{m_i}}  J_i(u_i, \boldsymbol{u}_{-i}), \label{def: dyn_game} 
\end{align}{}

which depends on the decision variables of other agents as well. From a game-theoretic perspective, this is the problem to compute a Nash equilibrium (NE), as formalized next.

\begin{defn}[Nash equilibrium]
A collection of decisions $\bfs{u}^*\coloneqq\col{u_i^*}_{i \in \mathcal{I}}$ is a Nash equilibrium if, for all $i \in \mathcal{I}$,
\begin{align}
    u_{i}^{*} \in \underset{v_{i} \in \mathbb{R}^{m_i}}{\operatorname{argmin}}\ J_{i}\left(v_{i}, \bfs{u}_{-i}^{*}\right).  \label{def: gne}
\end{align}\end{defn}{}
with $J_i$ as in \nref{eq: cost_i}.\kraj \\ \\

In plain words, a set of decision variables is a NE if no agent can improve its cost function by unilaterally changing its input. To ensure the existence of a NE and the solutions to the differential equations, we postulate the following basic assumption \cite[Cor. 4.2]{bacsar1998dynamic}:
\begin{stan_assum}[Regularity] \label{sassum: regularity}

For each $i \in \mathcal{I}$, the function $J_i$ in \nref{eq: cost_i} is continuously differentiable in $u_i$ and continuous in $\bfs{u}_{-i}$; the function $J_{i}\left(\cdot, \bfs{u}_{-i}\right)$ is strictly convex and radially unbounded in $u_i$ for every $\bfs{u}_{-i}$.\kraj
\end{stan_assum}

By stacking the partial gradients $\nabla_{u_i} J_i(u_i, \boldsymbol{u}_{-i})$ into a single vector, we form the so-called pseudogradient mapping
\begin{align}
    F(\boldsymbol{u}):=\operatorname{col}\left(\left(\nabla_{u_{i}} J_{i}\left(u_{i}, \bfs{u}_{-i}\right)\right)_{i \in \mathcal{I}}\right). \label{eq: pseudogradient}
\end{align}
From \cite[Equ. 1.4.7]{facchinei2007finite} and the fact that $u_i \in \R^{m_i}$ it follows that the Nash equilibrium $\bfs{u}^*$ satisfies 
\begin{align}
    F(\bfs{u}^*) = \bfs{0}_m. \label{eq: nash pseudogradient cond}
\end{align}

Let us also postulate the weakest working assumptions in NEPs with continuous actions, i.e. the monotonicity of the pseudogradient mapping and existence of solutions \cite[Def. 2.3.1, Thm. 2.3.4]{facchinei2007finite}.
\begin{stan_assum}[Monotonicity and existence]
The pseudogradient mapping $F$ in \nref{eq: pseudogradient} is monotone, i.e., for any pair $\bfs{u}, \bfs{v} \in \bfs{\Omega}$, it holds that $\vprod{\bfs{u} - \bfs{v}}{F(\bfs{u}) - F(\bfs{v})} \geq 0$; There exists a $\bfs{u}^*$ such that \nref{eq: nash pseudogradient cond} is satisfied.\kraj
\end{stan_assum}{}

Finally, let us define the following sets
\begin{align}
    \mathcal{S} &\coloneqq \{\bfs{u}\in \R^m \mid F(\bfs{u}) = 0\}\text{, (set of all NE)}\\
    \mathcal{A} &\coloneqq \{\col{\bfs{u}, \bfs{u}} \in \R^{2m} \mid \bfs{u}\in \mathcal{S}\}.
\end{align}
Thus, here we consider the problem of finding a NE of the game in \nref{def: gne} via the use of only zeroth-order information only, i.e. measurements of the values of the cost functions.

\section{Zeroth-order Nash  Equilibrium  seeking}
In this section, we present our main contribution: a novel extremum seeking algorithm for solving monotone NEPs. The extremum seeking dynamics consist of an oscillator $\bfs{\mu}$ which is used to excite the cost functions, a first-order filter $\bfs{\xi}$ that smooths out the pseudogradient estimation and improves performance, and a scheme as in \cite[Thm. 1]{gadjov2020exact} used for monotone NE seeking that, unlike regular pseudogradient decent, uses additional states $\bfs{z}$ in order to ensure convergence without strict monotonicity. We assume that the agents have access to the cost output only, hence, they do not directly know the actions of the other agents, nor they know the analytic expressions of their partial gradients. In fact, this is a standard setup used in extremum seeking, see (\cite{krstic2000stability}, \cite{poveda2017framework} among others. Our proposed continuous-time algorithm reads as follows

\begin{align}
     \forall i \in \mathcal{I}:  \m{
    \dot{z_i} \\
    \dot{u_i} \\
    \dot{\xi}\\
    \dot{\mu_i}
    } &=\m{
    \gamma_i\varepsilon_i\left(-z_i + u_i\right) \\
    \gamma_i\varepsilon_i\left(-u_i + z_i - \xi_i \right) \\
    \gamma_i\left( -\xi_i + \tilde F_i(\bfs{u}, \bfs{\mu}) \right) \\
    {2 \pi} \mathcal{R}_{i} \mu_i
    }, \label{eq: single agent dynamics}
\end{align}

or in equivalent collective form:

\begin{align}
    \m{
    \dot{\bfs{z}} \\
    \dot{\bfs{u}} \\
    \dot{\bfs{\xi}} \\
    \dot{\bfs{\mu}}
    }=\m{
    \bfs{\gamma}\bfs{\varepsilon}\left(-\bfs{z} + \bfs{\omega}\right) \\
    \bfs{\gamma}\bfs{\varepsilon}\left(-\bfs{\omega} + \bfs{z} - \bfs{\xi} \right) \\
    \bfs{\gamma}\left( - \bfs{\xi} + \tilde F(\bfs{u}, \bfs{\mu}) \right) \\
    {2 \pi} \mathcal{R}_{\kappa}\bfs{\mu}
    },\label{eq: complete system}
\end{align}

where $z_i, \xi_i \in \R^{m_{i}}$, $\mu_i \in \mathbb{S}^{m_i}$  are the oscillator states, $\varepsilon_i, \gamma_i \geq 0$ for all $i \in \mathcal{I}$, $\bfs{\varepsilon} \coloneqq \diag{ (\varepsilon_i I_{m_i})_{i \in \mathcal{I}}}$, $\bfs{\gamma} \coloneqq \diag{ (\gamma_i I_{m_i})_{i \in \mathcal{I}}}$, $\mathcal{R}_{\kappa} \coloneqq \diag{(\mathcal{R}_i)_{i \in \mathcal{I}}}$, $\mathcal{R}_i \coloneqq \diag{(\col{[0, -\kappa_j], [\kappa_j, 0]})_{j \in \mathcal{M}_i}}$, $\kappa_i > 0$ for all $i$ and $\kappa_i \neq \kappa_j$ for $i \neq j$, $\mathcal{M}_j \coloneqq \{\sum_{i = 1}^{j-1}m_i + 1, \dots, \sum_{i = 1}^{j-1}m_i + m_j\}$, $\mathbb{D}^n \in \R^{n \times 2n}$ is a matrix that selects every odd row from vector of size $2n$, $a_i > 0$ are small perturbation amplitude parameters, $A \coloneqq \diag{(a_i I_{m_i})_{j \in \mathcal{I}}}$, $J(\bfs{u}) = \diag{(J_i(u_i, \bfs{u}_{-i})I_{m_i})_{i \in \mathcal{I}}}$, $\tilde{F_i}(\bfs{u}, \bfs{\mu}) = \frac{2}{a_i}J_i(\bfs{u} + A \mathbb{D}^m \bfs{\mu}) \mathbb{D}^{m_i}{\mu_i}$ and $\tilde F(\bfs{u}, \bfs{\mu}) = 2A^{-1} J(\bfs{u} + A \mathbb{D}^m \bfs{\mu}) \mathbb{D}^m \bfs{\mu}$. Existence of solutions follows directly from \cite[Prop. 6.10]{goebel2009hybrid} as the the continuity of the right-hand side in \nref{eq: complete system} implies \cite[Assum. 6.5]{goebel2009hybrid}.\\ \\

Our main result is summarized in the following theorem.
\begin{theorem}
Let the Standing Assumptions hold and let $(\bfs{z}(t), \bfs{u}(t), \bfs{\xi}(t), \bfs{\mu}(t))_{t \geq 0}$ be the solution to \nref{eq: complete system} for arbitrary initial conditions. Then, the set $\mathcal{A} \times \R^m \times \mathbb{S}^m$ is SGPAS as $(\bar a, \bar\epsilon, \bar\gamma) = (\max_{i \in \mathcal{I}} a_i, \max_{i \in \mathcal{I}}\epsilon_i, \max_{i \in \mathcal{I}}\gamma_i) \rightarrow 0$. \kraj
\end{theorem}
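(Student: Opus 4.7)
\emph{Proof proposal.} The plan is to combine three standard reductions with the monotone convergence result of \cite{gadjov2020exact}, chained together via the SGPAS preservation theorems of hybrid systems theory \cite{poveda2017framework}. The dynamics in \nref{eq: complete system} have three well-separated time scales as $(\bar a,\bar\varepsilon,\bar\gamma)\to 0$: the oscillator $\bfs{\mu}$ is fast, the filter $\bfs{\xi}$ is at the intermediate scale $\gamma$, and the pair $(\bfs{z},\bfs{u})$ is at the slowest scale $\gamma\varepsilon$. I would exploit this structure by performing the reductions in the order fast\,$\to$\,intermediate\,$\to$\,slow, since the SGPAS definition itself prescribes an ordered tuning of the parameters.

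First, I would average with respect to $\bfs{\mu}$ to eliminate $\bar a$. Writing $\tilde F(\bfs{u},\bfs{\mu})=2A^{-1}J(\bfs{u}+A\mathbb{D}^m\bfs{\mu})\mathbb{D}^m\bfs{\mu}$ and Taylor-expanding each $J_i$ at $\bfs{u}$, the constant term vanishes by $\int_{\mathbb{S}^1}\mathbb{D}^1\mu\, d\mu=0$, the linear term contributes $F(\bfs{u})$ by the standard ESC identity (the distinct frequencies $\kappa_i\neq\kappa_j$ guarantee that cross terms average to zero), and the remainder is $O(\bar a)$. Thus the $\bfs{\mu}$-average of the right-hand side of \nref{eq: complete system} equals, up to $O(\bar a)$,
\begin{align*}
\dot{\bfs{z}} &= \bfs{\gamma}\bfs{\varepsilon}(-\bfs{z}+\bfs{u}),\quad
\dot{\bfs{u}} = \bfs{\gamma}\bfs{\varepsilon}(-\bfs{u}+\bfs{z}-\bfs{\xi}),\\
\dot{\bfs{\xi}} &= \bfs{\gamma}(-\bfs{\xi}+F(\bfs{u})).
\end{align*}
Invoking an averaging theorem for perturbed systems on compact sets (e.g.\ a hybrid/continuous averaging result as in \cite{poveda2017framework}), SGPAS of a compact attractor for the averaged system as $\bar\varepsilon,\bar\gamma\to 0$ would transfer to SGPAS of $\mathcal{A}\times\R^m\times\mathbb{S}^m$ of the original system as $(\bar a,\bar\varepsilon,\bar\gamma)\to 0$, in that order.

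Next, I would apply a two-time-scale singular perturbation argument to the averaged system to eliminate $\bar\varepsilon$. Rescaling time by $\gamma\varepsilon$, the filter state $\bfs{\xi}$ is fast and its boundary-layer equation $\bfs{\xi}'=-\bfs{\xi}+F(\bfs{u})$ is globally exponentially stable to the quasi-steady state $\bfs{\xi}=F(\bfs{u})$, uniformly in $\bfs{u}$. Substituting this into the $(\bfs{z},\bfs{u})$ dynamics yields the reduced system
\begin{align*}
\dot{\bfs{z}} = -\bfs{z}+\bfs{u},\qquad
\dot{\bfs{u}} = -\bfs{u}+\bfs{z}-F(\bfs{u}),
\end{align*}
which is exactly the continuous-time golden-ratio flow analyzed in \cite[Thm.~1]{gadjov2020exact}. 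Under the Standing Assumptions, that result gives UGAS of $\mathcal{A}$ for the reduced system. A standard singular perturbation result then upgrades UGAS of $\mathcal{A}$ for the reduced dynamics to SGPAS of $\mathcal{A}\times\R^m$ for the averaged three-state system as $\bar\varepsilon\to 0$, and the averaging conclusion of the previous step lifts this to the claimed SGPAS of $\mathcal{A}\times\R^m\times\mathbb{S}^m$ for \nref{eq: complete system} as $(\bar a,\bar\varepsilon,\bar\gamma)\to 0$.

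The main obstacle I anticipate is that the reduced flow only enjoys a \emph{weak} Lyapunov function (a quadratic that is non-increasing but not strictly decreasing), since $F$ is merely monotone rather than strongly monotone. Hence one cannot directly invoke classical singular perturbation or averaging theorems that require a strict Lyapunov decrease on the reduced system. The work-around is to appeal to the hybrid-systems framework of \cite{poveda2017framework,goebel2009hybrid}, whose averaging and singular perturbation corollaries require only UGAS (equivalently, a $\mathcal{KL}$ estimate) of a compact attractor for the reduced/averaged dynamics, together with well-posedness and boundedness of trajectories on bounded time intervals. Well-posedness follows from continuity of the right-hand side of \nref{eq: complete system}; UGAS of $\mathcal{A}$ for the reduced system is exactly \cite[Thm.~1]{gadjov2020exact}; boundedness of $\bfs{\mu}$ is automatic by invariance of $\mathbb{S}^m$, and that of $(\bfs{z},\bfs{u},\bfs{\xi})$ on bounded intervals follows from a LaSalle-type argument using the weak Lyapunov function together with a cascade bound on $\bfs{\xi}$. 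Assembling these three SGPAS reductions in the prescribed order $\bar a\to 0$, $\bar\varepsilon\to 0$, $\bar\gamma\to 0$ yields the theorem.
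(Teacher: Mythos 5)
Your proposal follows essentially the same route as the paper: Taylor expansion plus averaging along the oscillator with an $O(\bar a)$ remainder, a singular-perturbation step in which $\bfs{\xi}$ is the fast variable with an exponentially stable boundary layer, reduction to the continuous-time golden-ratio flow, and chaining of SGPAS-preservation results from hybrid systems theory (the paper uses \cite[Thm.~2, Exm.~1]{wang2012analysis}, \cite[Prop.~A.1]{poveda2021robust} and \cite[Thm.~7]{poveda2020fixed} for exactly the three liftings you describe). Your anticipation that only a $\mathcal{KL}$ bound, not exponential stability, is available for the reduced dynamics is also on point and is exactly why the paper invokes those particular results.

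There is, however, one genuine gap: the claim that after ``rescaling time by $\gamma\varepsilon$'' the reduced system is \emph{exactly} the golden-ratio flow of \cite[Thm.~1]{gadjov2020exact}. The gains $\gamma_i\varepsilon_i$ are agent-dependent, and a scalar time rescaling can only remove a common factor such as $\bar\gamma\bar\varepsilon$; the relative gains $\tilde{\bfs{\gamma}}\tilde{\bfs{\varepsilon}}$ remain as a fixed positive diagonal matrix multiplying the vector field, cf.~\nref{eq: reduced system dynamics dynamics}. This matters because a diagonally scaled monotone operator need not be monotone (take $F(\bfs{u})=\col{u_2,-u_1}$ and scale the second row by $2$), so \cite[Thm.~1]{gadjov2020exact} cannot be invoked as a black box for the scaled flow. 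This is precisely why the paper does not cite that theorem for this step but re-proves UGAS of $\mathcal{A}$: it uses the Lyapunov function \nref{eq: lyapunov fun cand} with the weighted norm $\n{\cdot}_{\tilde{\bfs{\gamma}}^{-1}\tilde{\bfs{\varepsilon}}^{-1}}$, whose weighting cancels the diagonal gains in $\dot V$ so that monotonicity of $F$ can still be exploited, and then runs a LaSalle invariance argument (ruling out nontrivial invariant trajectories in the set where $\bfs{u}_\text{r}=\bfs{z}_\text{r}$) to cope with the fact that $\dot V$ is only negative semidefinite. To close your proof you either need this weighted-Lyapunov/LaSalle step, or you must assume identical gains $\gamma_i\varepsilon_i\equiv\bar\gamma\bar\varepsilon$ across agents, which would weaken the statement. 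The remainder of your outline matches the paper's argument.
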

\begin{proof}

We rewrite the system in \nref{eq: complete system} as

\begin{align}
    \m{
    \dot{\bfs{z}} \\
    \dot{\bfs{u}} \\
    \dot{\bfs{\xi}} \\
    }&=\m{
   \bar\gamma \bar\varepsilon \tilde{\bfs{\gamma}} \tilde{\bfs{\varepsilon}}\left(-\bfs{z} + \bfs{\omega}\right) \\
    \bar\gamma \bar\varepsilon \tilde{\bfs{\gamma}} \tilde{\bfs{\varepsilon}}\left(-\bfs{\omega} + \bfs{z} - \bfs{\xi}\right) \\
    \bar\gamma  \tilde{\bfs{\gamma}} \left( - \bfs{\xi} + \tilde F(\bfs{\omega}, \bfs{\mu}) \right)     },\label{eq: complete system rewritten} \\
    \dot{\bfs{\mu}} &=  {2 \pi} \mathcal{R}_{\kappa}\bfs{\mu}, \label{eq: oscilator}
\end{align}

 where $\bar\gamma \coloneqq \max_{i \in \mathcal{I}} \gamma_i$, $\bar\varepsilon \coloneqq \max_{i \in \mathcal{I}} \varepsilon_i$, $\tilde{\bfs{\gamma}} \coloneqq \bfs{\gamma}/{\bar \gamma}$ and $\tilde{\bfs{\varepsilon}} \coloneqq \bfs{\varepsilon}/{\bar \varepsilon}$.  The system in \nref{eq: complete system rewritten}, \nref{eq: oscilator} is in singular perturbation from where $\bar \gamma$ is the time scale separation constant. The goal is to average the dynamics of $\bfs{z}, \bfs{u}, \bfs{\xi}$ along the solutions of $\bfs{\mu}$. For sufficiently small $\bar a \coloneqq \max_{i \in \mathcal{I}} a_i$, we can use the Taylor expansion to write down the cost functions as 
\begin{align}
    &J_i(\bfs{u} + A \mathbb{D} \bfs{\mu}) = J_i(u_i, \bfs{u}_{-i}) + a_i (\mathbb{D}^{m_i} \mu_i)^\top \nabla_{u_i}J_i(u_i, \bfs{u}_{-i}) \red &\quad + A_{-i} (\mathbb{D}^{m_{-i}} \bfs{\mu}_{-i})^\top \nabla_{u_{-i}}J(u_i, \bfs{u}_{-i}) + O(\bar a^2), \label{eq: Taylor approx}
\end{align}
where $A_{-i} \coloneqq \diag{(a_i I_{m_i})_{j \in \mathcal{I} \setminus \{i\}}}$. By the fact that the right-hand side of \nref{eq: complete system rewritten}, \nref{eq: oscilator} is continuous, by using \cite[Lemma 1]{poveda2020fixed} and by substituting \nref{eq: Taylor approx} into \nref{eq: complete system rewritten}, we derive the well-defined average of the complete dynamics:

\begin{align}
    \m{
    \dot{\bfs{z}}^\text{a} \\
    \dot{\bfs{u}}^\text{a} \\
    \dot{\bfs{\xi}}^\text{a} 
    }=\m{
    \bar\varepsilon \tilde{\bfs{\gamma}} \tilde{\bfs{\varepsilon}}\left(-\bfs{z}^\text{a} + \bfs{u}^\text{a}\right) \\
     \bar\varepsilon \tilde{\bfs{\gamma}} \tilde{\bfs{\varepsilon}}\left(-\bfs{u}^\text{a} + \bfs{z}^\text{a} - \bfs{\xi}^\text{a} \right) \\
     \tilde{\bfs{\gamma}} \left( - \bfs{\xi}^\text{a} +  F(\bfs{u}^\text{a}) + \mathcal{O}(\bar a)\right) 
    }.\label{eq: average dynamics with O(a)}
\end{align}

The system in \nref{eq: average dynamics with O(a)} is an $\mathcal{O}(\bar a)$ perturbed version of the nominal average dynamics:

\begin{align}
    \m{
    \dot{\bfs{z}}^\text{a} \\
    \dot{\bfs{u}}^\text{a} \\
    \dot{\bfs{\xi}}^\text{a} 
    }=\m{
    \bar\varepsilon \tilde{\bfs{\gamma}} \tilde{\bfs{\varepsilon}}\left(-\bfs{z}^\text{a} + \bfs{u}^\text{a}\right) \\
     \bar\varepsilon \tilde{\bfs{\gamma}} \tilde{\bfs{\varepsilon}}\left(-\bfs{u}^\text{a} + \bfs{z}^\text{a} - \bfs{\xi}^\text{a}  \right) \\
     \tilde{\bfs{\gamma}} \left( - \bfs{\xi}^\text{a} +  F(\bfs{u}^\text{a})\right) 
    }.\label{eq: real nominal average dynamics}
\end{align}

For sufficiently small $\bar \varepsilon$, the system in \nref{eq: real nominal average dynamics} is in singular perturbation form with dynamics $\bfs{\xi}^\text{a}$ acting as fast dynamics. The boundary layer dynamics \cite[Eq. 11.14]{khalil2002nonlinear} are given by

\begin{align}
    \dot{\bfs{\xi}}^\text{a}_{\text{bl}} = \tilde{\bfs{\gamma}} \left( - \bfs{\xi}^\text{a}_{\text{bl}} +  F(\bfs{\omega}^\text{a}_{\text{bl}})\right) \label{boundary layer dynamics}
\end{align}

For each fixed $\bfs{\omega}^\text{a}_{\text{bl}}$, $\{F(\bfs{\omega}^\text{a}_{\text{bl}}))\}$ is an uniformly globally exponentially stable equilibrium point of the boundary layer dynamics. By \cite[Exm. 1]{wang2012analysis}, it holds that the system in \nref{eq: real nominal average dynamics} has a well-defined average system given by

\begin{align}
    \m{
    \dot{\bfs{z}}_\text{r} \\
    \dot{\bfs{u}}_\text{r}}=\m{
     \tilde{\bfs{\gamma}} \tilde{\bfs{\varepsilon}}\left(-\bfs{z}_\text{r} + \bfs{u}_\text{r}\right) \\
     \tilde{\bfs{\gamma}} \tilde{\bfs{\varepsilon}}\left(-\bfs{u}_\text{r} + \bfs{z}_\text{r} - F(\bfs{u}_\text{r})\right) 
    }. \label{eq: reduced system dynamics dynamics}
\end{align}
To prove that the system in \nref{eq: reduced system dynamics dynamics} renders the set $\mathcal{A}$ UGAS, we consider the following Lyapunov function candidate:
\begin{align}
    V(\bfs{u}_\text{r}, \bfs{z}_\text{r}) =  \tfrac{1}{2}\n{\bfs{z}_\text{r} - \bfs{u}^*}^2_{{\tilde{\bfs{\gamma}}^{-1} \tilde{\bfs{\varepsilon}}^{-1}}} + \tfrac{1}{2}\n{\bfs{u}_\text{r} - \bfs{u}^*}^2_{{\tilde{\bfs{\gamma}}^{-1} \tilde{\bfs{\varepsilon}}^{-1}}}. \label{eq: lyapunov fun cand}
\end{align}
The time derivative of the Lyapunov candidate is then
\begin{align}
    \dot{V}(\bfs{u}_\text{r}, \bfs{z}_\text{r}) &= \vprod{\bfs{z}_\text{r} - \bfs{u}^*}{-\bfs{z}_\text{r} + \bfs{u}_\text{r}} \red & \quad + \vprod{\bfs{u}_\text{r} - \bfs{u}^*}{-\bfs{u}_\text{r} + \bfs{z}_\text{r} - F(\bfs{u}_\text{r})} \red &= -\n{\bfs{u}_\text{r} - \bfs{z}_\text{r}}^2 - \vprod{\bfs{u}_\text{r} - \bfs{u}^*}{F(\bfs{u}_\text{r})} \red 
     &= -\n{\bfs{u}_\text{r} - \bfs{z}_\text{r}}^2 - \vprod{\bfs{u}_\text{r} - \bfs{u}^*}{F(\bfs{u}_\text{r}) - F(\bfs{u}^*)} \red
     & \leq \n{\bfs{u}_\text{r} - \bfs{z}_\text{r}}^2,\label{eq: imperfect vdot}
\end{align}
where the last two lines follow from \nref{eq: nash pseudogradient cond} and Standing Assumption 2. Let us define the following sets:
\begin{align}
    \Omega_c &\coloneqq \{(\bfs{u}_\text{r}, \bfs{z}_\text{r}) \in \R^{2m} \mid V(\bfs{u}_\text{r}, \bfs{z}_\text{r}) \leq c\} \red
    \Omega_0 &\coloneqq \{(\bfs{u}_\text{r}, \bfs{z}_\text{r}) \in \Omega_c \mid \bfs{u}_\text{r} = \bfs{z}_\text{r} \} \red
    \mathcal{Z} &\coloneqq \{(\bfs{u}_\text{r}, \bfs{z}_\text{r}) \in \Omega_c \mid \dot{V}(\bfs{u}_\text{r}, \bfs{z}_\text{r}) = 0\} \red
    \mathcal{O} &\coloneqq \{(\bfs{u}_\text{r}, \bfs{z}_\text{r}) \in \Omega_c \mid (\bfs{u}_\text{r}(0), \bfs{z}_\text{r}(0)) \in \mathcal{Z} \implies 
    \red
    &\quad \quad (\bfs{u}_\text{r}(t), \bfs{z}_\text{r}(t)) \in \mathcal{Z}\  \forall t \in \R\}, 
\end{align}
where $\Omega_c$ is a compact level set chosen such that it is nonempty, $\mathcal{Z}$ is the set of zeros of the Lyapunov function candidate derivative, $\Omega_0$ is the superset of $\mathcal{Z}$ which follows from \nref{eq: imperfect vdot} and $\mathcal{O}$ is the maximum invariant set as explained in \cite[Chp. 4.2]{khalil2002nonlinear}. Then it holds that
\begin{align}
    \Omega_c \supseteq \Omega_0 \supseteq \mathcal{Z} \supseteq \mathcal{O} \supseteq \mathcal{A}.
\end{align}
Firstly, for any compact set $\Omega_c$, since the right-hand side of \nref{eq: reduced system dynamics dynamics} is (locally) Lipschitz continuous and therefore by \cite[Thm. 3.3]{khalil2002nonlinear} we conclude that solutions to \nref{eq: reduced system dynamics dynamics} exist and are unique. Next, in order to prove convergence to a NE solution, we will show that $\mathcal{A} \equiv \mathcal{O}$, which is equivalent to saying that the only $\omega$-limit trajectories in $\mathcal{O}$ are the stationary points defined by $\mathcal{A}$. It is sufficient to prove that there cannot exist any positively invariant trajectories in $\Omega_0$ on any time interval $[t_1, t_2]$ where it holds
\begin{align}
    \m{\bfs{z}_\text{r}(t_1) \\ \bfs{u}_\text{r}(t_1)} \neq \m{\bfs{z}_\text{r}(t_2) \\ \bfs{u}_\text{r}(t_2)}\text{ and }  \m{\bfs{z}_\text{r}(t_1) \\ \bfs{u}_\text{r}(t_1)},\m{\bfs{z}_\text{r}(t_2) \\ \bfs{u}_\text{r}(t_2)} \in \Omega_0. \label{contradicion}
\end{align}
For the sake of contradiction, let us assume otherwise: there exists at least one such trajectory, which would be then defined by the following differential equations

\begin{align}
    \m{
    \dot{\bfs{z}_\text{r}} \\
    \dot{\bfs{u}_\text{r}}}=\m{
     0\\ - F(\bfs{u}_\text{r}) 
    }. \label{eq: invariant traj}
\end{align}

For this single trajectory, as $\dot{\bfs{z}_\text{r}} = 0$, it must then hold that $\bfs{z}_\text{r}(t_1) = \bfs{z}_\text{r}(t_2)$ and from the properties of $\Omega_0$, it follows that $\bfs{u}_\text{r}(t_1) = \bfs{z}_\text{r}(t_1)$ and $\bfs{u}_\text{r}(t_2) = \bfs{z}_\text{r}(t_2)$. From these three statements we conclude that $\bfs{u}_\text{r}(t_1) = \bfs{u}_\text{r}(t_2)$. Moreover, as a part of the definition of the trajectory, we must have $\bfs{u}_\text{r}(t_1) \neq \bfs{u}_\text{r}(t_2)$. Therefore, we have reached a contradiction. Thus, there does not exist any positively invariant trajectory in $\Omega_0$ such that \nref{contradicion} holds. Thus the only possible positively invariant trajectories are the ones where we have $\bfs{u}_\text{r}(t_1) = \bfs{u}_\text{r}(t_2)$ and $\bfs{z}_\text{r}(t_1) = \bfs{z}_\text{r}(t_2)$, which implies that $(\bfs{u}_\text{r}, \bfs{z}_\text{r}) \in \mathcal{A}$. Since the set $\mathcal{O}$ is a subset of the set $\Omega_0$, we conclude that the $\omega$-limit set is identical to the set $\mathcal{A}$. Therefore, by La Salle's theorem \cite[Thm. 4]{khalil2002nonlinear}, we conclude that the set $\mathcal{A}$ is UGAS for the in dynamics in \nref{eq: reduced system dynamics dynamics}.\\ \\

Next, by \cite[Thm. 2, Exm. 1]{wang2012analysis}, the dynamics in \nref{eq: real nominal average dynamics} render the set $\mathcal{A} \times \R^m$ SGPAS as $(\bar \varepsilon \rightarrow 0)$. As the right-hand side of the equations in \nref{eq: real nominal average dynamics} is continuous, the system is a well-posed hybrid dynamical system \cite[Thm. 6.30]{goebel2009hybrid} and therefore the $O(\bar a)$ perturbed system in \nref{eq: average dynamics with O(a)} renders the set $\mathcal{A} \times \R^m$ SGPAS as $(\bar \varepsilon, \bar a)\rightarrow 0$ \cite[Prop. A.1]{poveda2021robust}. By noticing that the set $\mathbb{S}^{m}$ is UGAS under oscillator dynamics in \nref{eq: oscilator} that generate a well-defined average system in \nref{eq: average dynamics with O(a)}, and by averaging results in \cite[Thm. 7]{poveda2020fixed}, we obtain that the dynamics in \nref{eq: complete system} make the set $\mathcal{A} \times \R^m \times \mathbb{S}^{m}$ SGPAS as $(\bar \varepsilon, \bar a, \bar \gamma) \rightarrow 0$. 

\end{proof}

\section{Simulation examples}
\subsection{Failure of the pseudogradient algorithm}
Let us consider a classic example on which the standard pseudogradient algorithm fails \cite{grammatico2018comments}:

\begin{align}
    J_1(u_1, u_2) &= (u_1 - u_1^*)(u_2 - u_2^*) \red
    J_2(u_1, u_2) &= -(u_1 - u_1^*)(u_2 - u_2^*), \label{eq: example}
\end{align}
where the game in \nref{eq: example} has a unique Nash Equilibrium at $(u_1^*, u_2^*)$ and the pseudogradient is only monotone. We compare our algorithm to the original and modified algorithm in \cite{frihauf2011nash}, where in the modified version, we introduce additional filtering dynamics to improve the performance:
\begin{align}
    \m{
    \dot{\bfs{u}} \\
    \dot{\bfs{\xi}} \\
    \dot{\bfs{\mu}}
    }=\m{
    -\bfs{\gamma}\bfs{\varepsilon} \bfs{\xi}  \\
    \bfs{\gamma}( - \bfs{\xi} + \tilde F(\bfs{u}, \bfs{\mu})) \\
    {2 \pi} \mathcal{R}_{\kappa}\bfs{\mu}
    }.
\end{align}

As simulation parameters we choose $a_i = 0.1$, $\gamma_i = 0.1$, $\varepsilon_i = 1$ for all $i$, $u_1^* = 2$, $u_2^* = -3$ and the frequency parameters $\kappa_i$ randomly in the range $[0, 1]$. We show the numerical results in Figures \ref{fig: u trajectories} and \ref{fig: u states}. The proposed NE seeking (NESC) algorithm steers the agents towards their NE, while both versions of the algorithm in \cite{frihauf2011nash} fail to converge and exhibits circular trajectories as in the full-information scenario \cite{grammatico2018comments}.

\begin{figure}
    \centering
    \includegraphics[width = \linewidth]{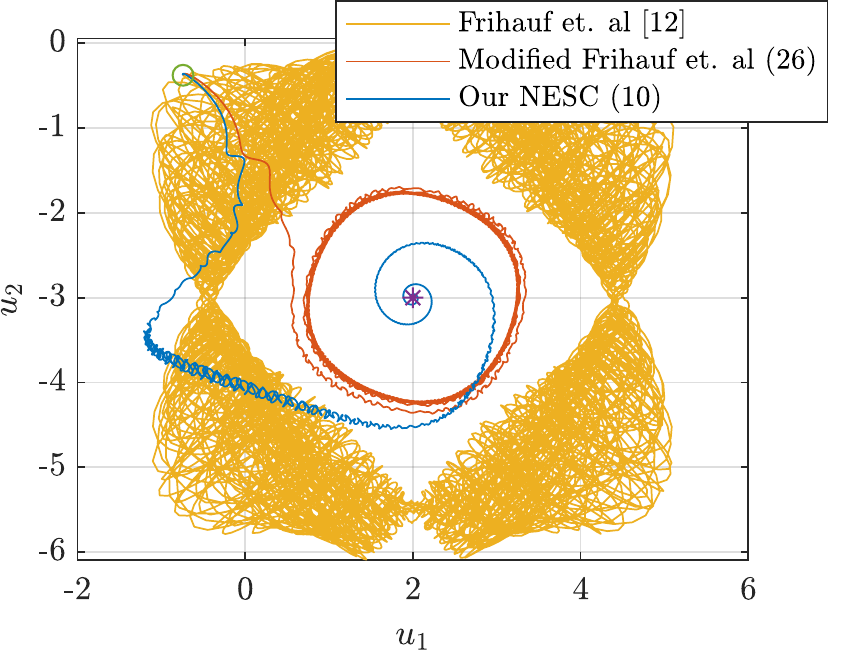}
    \caption{Input trajectories from an initial condition $\bfs{u}(0)$ (denoted by {\color{green}$\circ$}) towards $\bfs{u}^*$ (denoted by {\color{violet} $*$}).}
    \label{fig: u trajectories}
\end{figure}

\begin{figure}
    \centering
    \includegraphics[width = \linewidth]{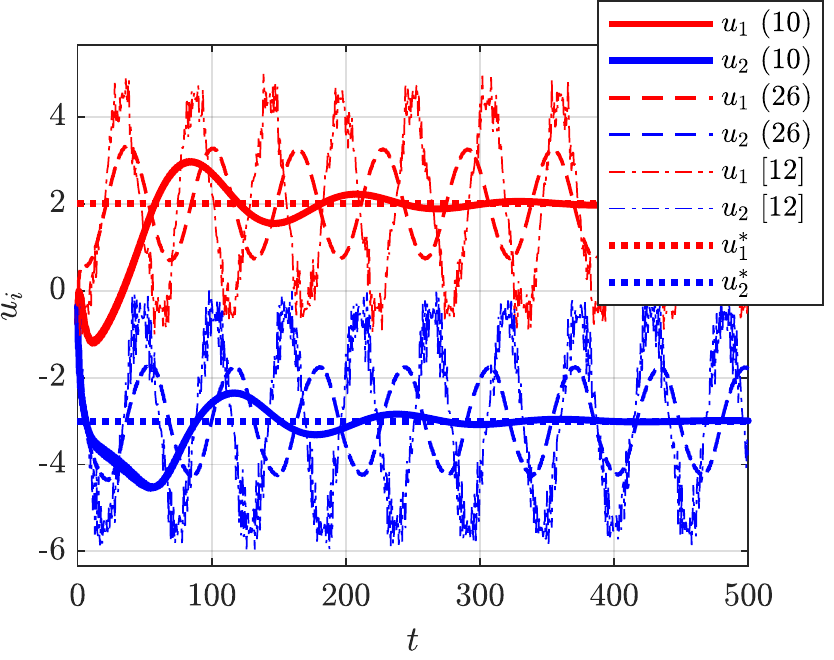}
    \caption{Time evolution of inputs $u_1$ and $u_2$ for the proposed NESC algorithm (solid line), the original algorithm in \cite{frihauf2011nash} (dash dot line) and the filtered version of the algorithm in \cite{frihauf2011nash} (dashed lines). }
    \label{fig: u states}
\end{figure}

\subsection{Fixed demand problem}
In the second simulation example, we consider the problem of determining the production outputs $u_i \in \R$ so that $N$ producers minimize their cost and meet the some fixed demand $U_d \in \R$ (see the power generator examples in \cite{aunedi2008optimizing} \cite{pantoja2011population}). The producers do not know the exact analytic form of their cost functions, which are given by:
\begin{align}
    J_i(u_i, \lambda) = u_i(u_i - 2U_i) - \lambda u_i,
\end{align}
where the first part corresponds to the unknown part of their cost and $\lambda u_i$ corresponds to the profit made by selling the commodity at the price $\lambda$. The last agent in this game is the market regulator whose goal is to balance supply and demand via the commodity price, by adopting the following cost function:
\begin{align}
    J_{N+1}(\bfs{u}, \lambda) =   \lambda (-U_d + \Sigma_{i = 1}^N u_i). 
\end{align}
The producers and the market regulator use the algorithm in \nref{eq: single agent dynamics} to determine the production output and price, albeit the market regulator uses the real value of its gradient, measurable as the discrepancy between the supply and demand. In the simulations, we use the following parameters: $N = 3$, $(U_1, U_2, U_3) = (172, 47, 66)kW$, $a_1 = a_2 = a_3 = 20$, $(\kappa_1, \kappa_2, \kappa_3) = (0.1778, 0.1238, 0.1824)$, $\epsilon_i = \tfrac{1}{3}$, $\gamma_i = 0.02$ for all $i$, $U_d = 350$ and zero initial conditions. In Figure \ref{fig:second_examplel}, we observe that the agents converge to the NE of the game. Additionally, we test the sensitivity of the commodity price with respect to additive measurement noise that obeys a Gaussian distribution with zero mean and standard deviation $\sigma$ for all producers. For different values of the standard deviation $\sigma$, we perform 200 numerical simulations. Next, we take the last 250 seconds of each simulation and sample it every 1 second. We group the resulting prices $\lambda(t)$ into bins of width 0.05, and plot the frequency of each bin. Three such plots are shown on Figure \ref{fig:second_example_distribution}. We observe that the price frequency plots seem to follow a Gaussian-like distribution. 


\begin{figure}
    \centering
    \includegraphics[width=\linewidth]{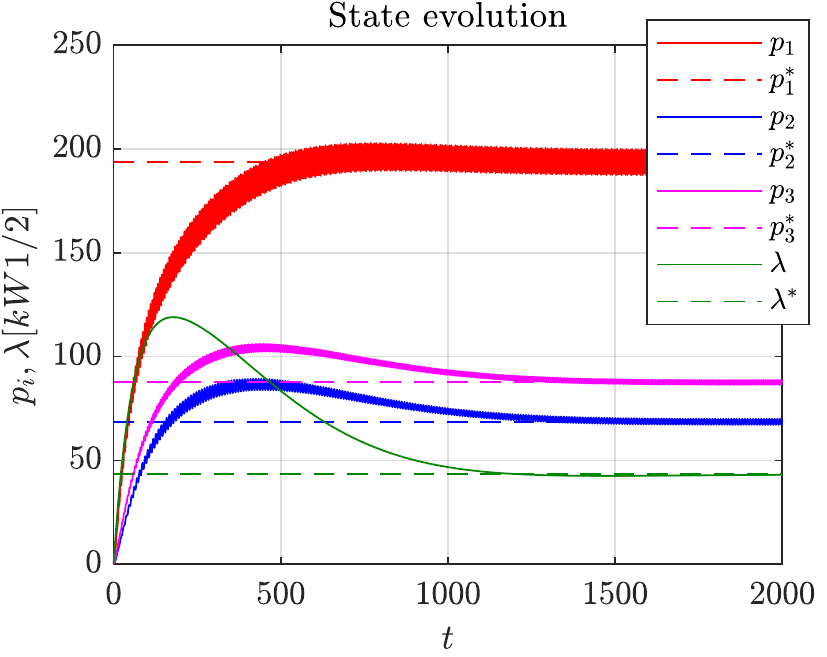}
    \caption{Time evolution of states in the fixed demand problem.}
    \label{fig:second_examplel}
\end{figure}
\begin{figure}
    \centering
    \includegraphics[width=\linewidth]{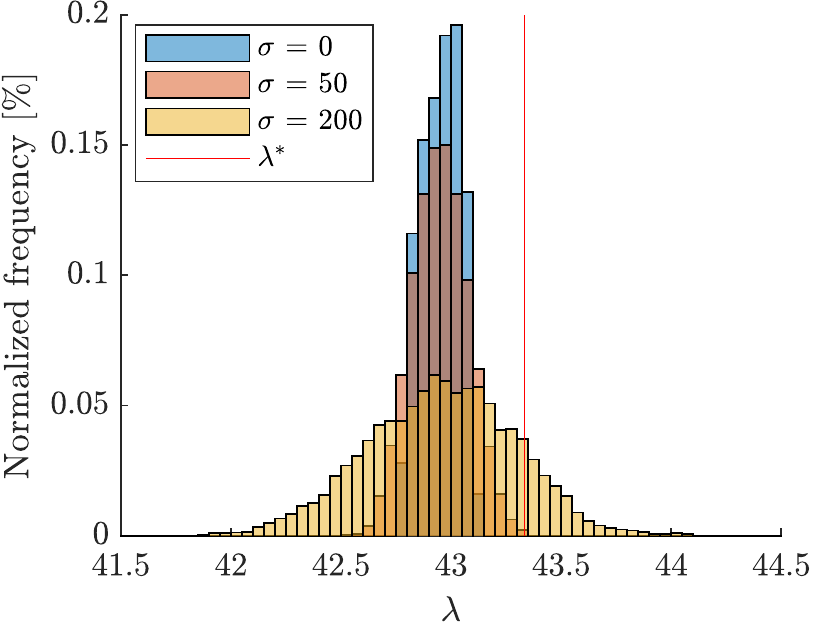}
    \caption{Price frequency distribution for three different cases.}
    \label{fig:second_example_distribution}
\end{figure}

\section{Conclusion}
Monotone Nash equilibrium problems without constraints can be solved via zeroth-order methods that leverage the properties of the continuous-time Golden ratio algorithm and ESC theory based on hybrid dynamical systems. 

\endlinechar=13
\bibliographystyle{IEEEtran}
\bibliography{biblioteka}

\endlinechar=-1
\appendix

\subsection{On the projection case} \label{app: projection case}
In this appendix, we show why the usual Lyapunov function candidate as in \nref{eq: lyapunov fun cand}, \cite{grammatico2017dynamic}, \cite{boct2017dynamical},  \cite{bianchi2021continuous}, \cite{bot2020forward}, cannot be used for the projected version of the proposed algorithm. In the simplest projected version, we would have projections onto convex sets as in \cite{malitsky2019golden}:
\begin{align}
    \m{
    \dot{\bfs{z}} \\
    \dot{\bfs{u}}}=\m{
     \left(-\bfs{z} + \bfs{u}\right) \\
     \left(-\bfs{u} + \proj_\Omega(\bfs{z} - F(\bfs{u})\right)) 
    }. \label{eq: nominal average dynamics bad}
\end{align}

Let us show a case where the Lyapunov function candidate in \nref{eq: lyapunov fun cand} increases. In Figure \ref{fig: counter_example}, we consider $F(\bfs{u}) = \col{u_2, -u_1}$, that the convex set $\bfs{\Omega}$ is given by the blue set and that the initial point is characterized by $\bfs{u} = \bfs{z} = [0, u_2]^\top$. The dotted lines represent the level sets of the Lyapunov function. The blue arrow represents the vector $-F(\bfs{u})$, while the red one represents $-\bfs{u} + \proj_\Omega(\bfs{z} - F(\bfs{u})$. We can see that the Lyapunov function does increase. For a different modulus of the pseudogradient vector, it is always possible to construct a convex set for which the Lyapunov function candidate increases. We conclude that a different Lyapunov candidate must be used to prove convergence in presence of constraints, which currently represents an open research problem.

\begin{figure}[ht]
    \centering
    \def\svgwidth{1\linewidth}
\begingroup%
  \makeatletter%
  \providecommand\color[2][]{%
    \errmessage{(Inkscape) Color is used for the text in Inkscape, but the package 'color.sty' is not loaded}%
    \renewcommand\color[2][]{}%
  }%
  \providecommand\transparent[1]{%
    \errmessage{(Inkscape) Transparency is used (non-zero) for the text in Inkscape, but the package 'transparent.sty' is not loaded}%
    \renewcommand\transparent[1]{}%
  }%
  \providecommand\rotatebox[2]{#2}%
  \newcommand*\fsize{\dimexpr\f@size pt\relax}%
  \newcommand*\lineheight[1]{\fontsize{\fsize}{#1\fsize}\selectfont}%
  \ifx\svgwidth\undefined%
    \setlength{\unitlength}{246.27155445bp}%
    \ifx\svgscale\undefined%
      \relax%
    \else%
      \setlength{\unitlength}{\unitlength * \real{\svgscale}}%
    \fi%
  \else%
    \setlength{\unitlength}{\svgwidth}%
  \fi%
  \global\let\svgwidth\undefined%
  \global\let\svgscale\undefined%
  \makeatother%
  \begin{picture}(1,0.79381966)%
    \lineheight{1}%
    \setlength\tabcolsep{0pt}%
    \put(0,0){\includegraphics[width=\unitlength,page=1]{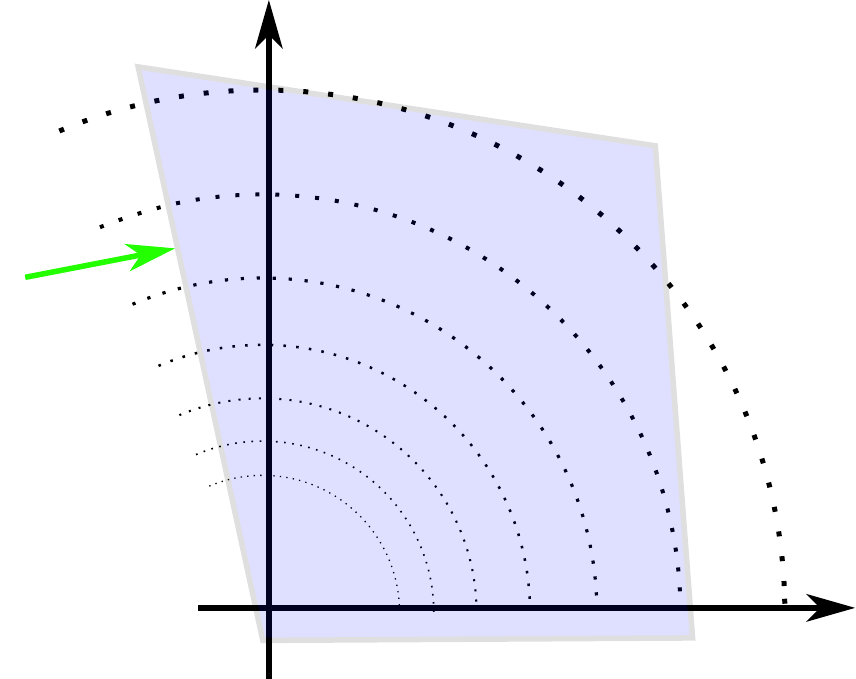}}%
    \put(0.34330384,0.60451692){\color[rgb]{0,0,0}\makebox(0,0)[lt]{\lineheight{1.25}\smash{\begin{tabular}[t]{l}$\Omega$\end{tabular}}}}%
    \put(0,0){\includegraphics[width=\unitlength,page=2]{counter_example.pdf}}%
    \put(0.33361341,0.1186243){\color[rgb]{0,0,0}\makebox(0,0)[lt]{\lineheight{1.25}\smash{\begin{tabular}[t]{l}$\bfs{u}^*$\end{tabular}}}}%
    \put(0,0){\includegraphics[width=\unitlength,page=3]{counter_example.pdf}}%
    \put(0.32557032,0.48671351){\color[rgb]{0,0,0}\makebox(0,0)[lt]{\lineheight{1.25}\smash{\begin{tabular}[t]{l}$(\bfs{u}, \bfs{z})$\end{tabular}}}}%
    \put(0,0){\includegraphics[width=\unitlength,page=4]{counter_example.pdf}}%
    \put(0.19190741,0.53066288){\color[rgb]{0,0,0}\makebox(0,0)[lt]{\lineheight{1.25}\smash{\begin{tabular}[t]{l}$\bfs{p}$\end{tabular}}}}%
  \end{picture}%
\endgroup%

    \caption{Blue arrow denotes the negative vector of $F(\bfs{u}) = \col{u_2, u_1}$, while the purple area represents the set $\Omega$; level sets of the Lyapunov function are represented with doted lines; $\bfs{u}^*$ is the NE; $\bfs{p} = \proj_\Omega(\bfs{u} - F(\bfs{u}))$. For the case when $\bfs{u} = \bfs{z}$, derivative $\dot{\bfs{u}} = -\bfs{u} + \bfs{p}$ is denoted with the red arrow. As the vector points outside of the level set, the Lyapunov function increases.}
    \label{fig: counter_example}
\end{figure}


\end{document}